\documentclass[
    aps,
    pra,
    reprint,
    amsmath,
    amssymb,
    superscriptaddress,
    nofootinbib
]{revtex4-2}

\usepackage{graphicx}
\usepackage{dcolumn}
\usepackage{bm}
\usepackage[utf8]{inputenc}
\usepackage[T1]{fontenc}
\usepackage{mathptmx}
\usepackage{hyperref}
\usepackage{amsmath}
\usepackage{amssymb}
\usepackage{amsthm}

\hypersetup{
    setpagesize = false,
    colorlinks  = true,
    urlcolor    = blue,
    linkcolor   = black,
    citecolor   = black,
    pdftitle    = {Toroidal-Core Certificates for PPT-Squared Diagonal Orthogonal Covariant Channels},
    pdfauthor   = {Samuel A. M\'arquez Gonz\'alez}
}

\newtheorem{theorem}{Theorem}
\newtheorem{lemma}{Lemma}
\newtheorem{proposition}{Proposition}
\newtheorem{corollary}{Corollary}
\newtheorem{remark}{Remark}

\newcommand{\EB}{\mathrm{EB}}
\newcommand{\TCP}{\mathrm{TCP}}

\begin{document}

\title{Toroidal-Core Certificates for PPT-Squared Diagonal Orthogonal Covariant Channels}

\author{Samuel A. M\'arquez Gonz\'alez}

\affiliation{
Department of Mathematical Sciences,
Rutgers University,
Camden, New Jersey 08102, USA
}

\email{sam959@scarletmail.rutgers.edu}

\begin{abstract}
The positive-partial-transpose-squared (PPT-squared) conjecture asks whether the composition of two positive-partial-transpose (PPT) quantum channels must be entanglement breaking (EB). The problem remains open in general and, within the diagonal orthogonal covariant (DOC) class, the first unresolved deterministic dimension is four. A sufficient condition for PPT-squared composition is derived for DOC maps in arbitrary finite dimension. The proof isolates the quantum coherence data into two correlation matrices, mixes each with the identity until it lies in the convex hull of rank-one correlation matrices, and converts the resulting objects into an explicit triplewise completely positive (TCP) core. The remaining contribution is purely classical and is TCP whenever a simple entrywise mixing inequality is satisfied. For PPT DOC maps $\Phi_{A,B,C}$ and $\Phi_{D,E,F}$ on $M_d$, the entrywise bound $(AD)_{ij}\ge 2\lfloor\sqrt{d}\rfloor\sqrt{A_{ii}A_{jj}D_{ii}D_{jj}}$ for all $i,j$ guarantees that $\Phi_{A,B,C}\circ\Phi_{D,E,F}$ is EB. In dimension four the universal coefficient becomes four. A continuous family of bistochastic PPT channels is then constructed; every member with $a\ne1$ is not EB, while every pairwise composition within the family is proved to be EB. A rank-sensitive refinement shows explicitly how low coherence rank can improve the universal coefficient. The resulting certificate is sufficient rather than necessary and is complementary to factor-width and semidefinite-hierarchy approaches.
\end{abstract}

\maketitle

\section{Introduction}

The positive-partial-transpose-squared (PPT-squared) conjecture asks whether composing two positive-partial-transpose (PPT) quantum channels must always produce an entanglement-breaking (EB) channel. The EB property is equivalent to separability of the Choi matrix, while PPT provides a natural larger channel class in which bound-entangled Choi states can occur \cite{HorodeckiShorRuskai2003,Ruskai2003QubitEB}. These channel-state correspondences rest on the Choi--Jamio\l kowski representation \cite{Jamiolkowski1972,Choi1975} and the PPT separability criterion \cite{Peres1996,Horodecki1996}. A broader account of the entanglement framework is given in Ref.~\cite{HorodeckiReview2009}.

Composition-to-EB questions arise in several forms. Entanglement-breaking indices and entanglement-saving channels quantify how repeated noise crosses the EB boundary \cite{LamiGiovannetti2015Indices,LamiGiovannetti2016Saving}. Eventual EB has also been studied through discrete and continuous dynamics \cite{RahamanJaquesPaulsen2018,HansonRouzeFranca2020,AhiableEtAl2021}, while channel divisibility provides a complementary dynamical viewpoint \cite{WolfCirac2008}. For PPT-squared itself, substantial progress is known in low dimensions and structured families \cite{KennedyManorPaulsen2018,ChristandlMullerHermesWolf2019,ChenYangTang2019,SinghNechita2022}. Recent work proves generic EB for compositions of random diagonal orthogonal covariant (DOC) channels, but leaves the deterministic DOC problem open \cite{NechitaPark2026}. Semidefinite-hierarchy experiments on structured local diagonal orthogonal invariant (LDOI) states likewise found no counterexample in the first unresolved deterministic dimension, $d=4$ \cite{BritzLaurent2025}.

Diagonal unitary and orthogonal symmetry is particularly useful because positivity, separability, and composition can be expressed directly through a small set of matrices \cite{SinghNechita2022,SinghNechita2021Quantum}. For DOC channels, an associated stochastic matrix controls the diagonal sector and also plays a central role in their ergodic behavior \cite{SinghDattaNechita2024}. The proof below combines this separation with a theorem of Kribs, Levick, Pereira, and Rahaman on complex correlation matrices \cite{KribsLevickPereiraRahaman2024}. If $C$ is a $d\times d$ correlation matrix and
\begin{equation}
m=\lfloor\sqrt{d}\rfloor,
\end{equation}
then
\begin{equation}
\frac{1}{m}C+\frac{m-1}{m}I
\label{eq:1}
\end{equation}
lies in the convex hull of rank-one complex correlation matrices. Such matrices will be called toroidal. This phase-mixture structure can be lifted directly into the triplewise completely positive (TCP) cone that characterizes EB in the DOC class \cite{SinghNechita2021Quantum}. Classical completely positive matrices provide the finite-dimensional geometric analog underlying the residual part of the construction \cite{Diananda1962,MaxfieldMinc1962,BermanShakedMonderer2003}.

The main outcome is a dimension-dependent sufficient condition involving only the diagonal-sector matrices $A$ and $D$. The coherence matrices $B,C,E,F$ determine two intermediate correlation matrices but disappear completely from the final inequality. The abstract criterion is then applied to an explicit continuous family $\Phi_{a,Z}$ of four-dimensional channels. Every member with $a\ne1$ is PPT and not EB, yet the composition of any two family members is EB. Thus the family is not merely an illustration of the criterion but a continuous set of nontrivial PPT-squared instances in the first unresolved deterministic DOC dimension. A short rank-sensitive refinement further quantifies how much the universal factor $\lfloor\sqrt{d}\rfloor$ can overestimate the mixing required when the coherence correlation matrices have low rank.

The result is deliberately a sufficient certificate, not a proposed characterization of PPT-squared DOC composition. The universal coefficient is saturated by a simple point of the family constructed below, but no claim is made that it marks an EB boundary. The objective is instead to isolate one analytically tractable mechanism by which classical mixing in the diagonal sector forces the quantum coherence sectors into a TCP decomposition.

From a physical viewpoint, the matrices $A$ and $D$ govern population transfer among the preferred basis sectors, while $B,C,E,F$ encode the associated coherence data. The criterion shows that sufficiently strong population mixing can dominate the scale set by those coherences: after the coherence contribution is absorbed into a toroidal TCP core, the remaining part is purely classical and entrywise nonnegative. In this sense, the certificate identifies a regime in which classical population mixing is strong enough to wash out the entanglement-carrying effect of the coherent sectors at the level of the composed channel.

\section{DOC maps and triplewise complete positivity}

Let $M_d$ denote the algebra of $d\times d$ complex matrices. For a vector $x\in\mathbb{C}^d$, $\operatorname{Diag}(x)$ denotes the diagonal matrix with diagonal $x$, while $\operatorname{diag}(X)$ is the diagonal vector of a matrix $X$. The Hadamard product is denoted by $\odot$, and $J_n$ denotes the $n\times n$ all-ones matrix.

A linear map $\Phi:M_d\to M_d$ is DOC when
\begin{equation}
\Phi(OXO)=O\Phi(X)O
\label{eq:2}
\end{equation}
for every diagonal orthogonal matrix $O$ with diagonal entries in $\{\pm1\}$. Such maps are parametrized by triples $(A,B,C)$ with a common diagonal \cite{SinghNechita2021Quantum}. In the convention used here,
\begin{equation}
\Phi_{A,B,C}(X)=\operatorname{Diag}(A\operatorname{diag}X)+\widetilde{B}\odot X+\widetilde{C}\odot X^{T},
\label{eq:3}
\end{equation}
where
\begin{equation}
\widetilde{B}=B-\operatorname{Diag}(\operatorname{diag}B),\qquad
\widetilde{C}=C-\operatorname{Diag}(\operatorname{diag}C),
\label{eq:4}
\end{equation}
and
\begin{equation}
\operatorname{diag}A=\operatorname{diag}B=\operatorname{diag}C.
\label{eq:5}
\end{equation}
Trace preservation is equivalent to every column of $A$ summing to one, while unitality is equivalent to every row of $A$ summing to one \cite{SinghNechita2021Quantum,SinghDattaNechita2024}.

The PPT condition is especially explicit. A DOC map is completely positive and completely copositive exactly when
\begin{equation}
A\ge0 \text{ entrywise},\qquad B\succeq0,\qquad C\succeq0,
\label{eq:6}
\end{equation}
and
\begin{equation}
A_{ij}A_{ji}\ge\max\{|B_{ij}|^2,|C_{ij}|^2\}
\label{eq:7}
\end{equation}
for every $i,j$ \cite{SinghNechita2022,SinghNechita2021Quantum}.

The EB condition is described by the TCP cone. A triple $(A,B,C)$ with common diagonal is TCP when there exist matrices $V,W\in M_{d\times r}$ such that
\begin{align}
A&=(V\odot\overline{V})(W\odot\overline{W})^{*},
\label{eq:8}\\
B&=(V\odot W)(V\odot W)^{*},
\label{eq:9}\\
C&=(V\odot\overline{W})(V\odot\overline{W})^{*}.
\label{eq:10}
\end{align}
The fundamental correspondence is
\begin{equation}
\Phi_{A,B,C}\in\EB\iff(A,B,C)\in\TCP_d.
\label{eq:11}
\end{equation}
This equivalence is established through the LDOI Choi representation \cite{SinghNechita2021Quantum}. It places the present problem in the same convex-geometric setting as classical complete positivity and pairwise complete positivity (PCP) \cite{Diananda1962,MaxfieldMinc1962,BermanShakedMonderer2003}.

The composition law is also closed in the DOC class. Let
\begin{equation}
\Phi=\Phi_{A,B,C},\qquad\Psi=\Phi_{D,E,F}.
\label{eq:12}
\end{equation}
Then
\begin{equation}
\Phi\circ\Psi=\Phi_{G,H,K}
\label{eq:13}
\end{equation}
with
\begin{equation}
G=AD,
\label{eq:14}
\end{equation}
\begin{equation}
H=B\odot E+C\odot F^{T}
+\operatorname{Diag}\Big(\operatorname{diag}\big(AD-B\odot E-C\odot F^{T}\big)\Big),
\label{eq:15}
\end{equation}
and
\begin{equation}
K=B\odot F+C\odot E^{T}
+\operatorname{Diag}\Big(\operatorname{diag}\big(AD-B\odot F-C\odot E^{T}\big)\Big).
\label{eq:16}
\end{equation}
Equivalent forms of these identities are used in the PPT-squared literature \cite{KennedyManorPaulsen2018,ChristandlMullerHermesWolf2019,SinghNechita2022}. In particular, for $i\ne j$,
\begin{equation}
H_{ij}=B_{ij}E_{ij}+C_{ij}F_{ji},
\label{eq:17}
\end{equation}
\begin{equation}
K_{ij}=B_{ij}F_{ij}+C_{ij}E_{ji},
\label{eq:18}
\end{equation}
while
\begin{equation}
H_{ii}=K_{ii}=G_{ii}.
\label{eq:19}
\end{equation}

\section{Toroidal TCP cores}

A correlation matrix is a positive semidefinite matrix with unit diagonal. A complex correlation matrix $R$ is toroidal when it belongs to the convex hull of rank-one correlation matrices,
\begin{equation}
R=\sum_{\mu}p_{\mu}u_{\mu}u_{\mu}^{*},\qquad
|u_{\mu i}|=1,\qquad p_{\mu}\ge0,\qquad\sum_{\mu}p_{\mu}=1.
\label{eq:20}
\end{equation}
Toroidal correlation matrices are precisely the correlation matrices associated with mixed-unitary Schur channels. Their role in mixed-unitary theory and their guaranteed neighborhood around the identity are developed in Ref.~\cite{KribsLevickPereiraRahaman2024}.

\begin{lemma}[Toroidal TCP core]
\label{lem:1}
Let $R,S\in M_d$ be toroidal correlation matrices and let $p\in\mathbb{R}_{+}^{d}$. Then
\begin{equation}
\big(pp^{T},\,D_pRD_p,\,D_pSD_p\big)\in\TCP_d,
\label{eq:21}
\end{equation}
where $D_p=\operatorname{Diag}(p)$.
\end{lemma}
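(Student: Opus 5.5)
We construct explicit matrices $V,W$ realizing the factorization \eqref{eq:8}--\eqref{eq:10}. Write the toroidal decompositions
\begin{equation*}
R=\sum_{\mu=1}^{r}p_\mu u_\mu u_\mu^{*},\qquad S=\sum_{\nu=1}^{s}q_\nu w_\nu w_\nu^{*},
\end{equation*}
with $|u_{\mu i}|=|w_{\nu i}|=1$. (We rename the mixing weights of $S$ as $q_\nu$ to avoid a clash with the vector $p$.) The key observation is that for a single column pair $v,w$ with unimodular phases the three formulas decouple. Taking $v_i=\sqrt{p_i}\,\alpha_i$ and $w_i=\sqrt{p_i}\,\beta_i$ with $|\alpha_i|=|\beta_i|=1$ gives
\begin{equation*}
(v\odot\bar v)(w\odot\bar w)^{*}=pp^T,\qquad (v\odot w)(v\odot w)^{*}=D_p(\alpha\odot\beta)(\alpha\odot\beta)^{*}D_p,
\end{equation*}
\begin{equation*}
(v\odot\bar w)(v\odot\bar w)^{*}=D_p(\alpha\odot\bar\beta)(\alpha\odot\bar\beta)^{*}D_p.
\end{equation*}
So the $A$-component is insensitive to the phases, while the $B$- and $C$-components see only the phase vectors $\alpha\odot\beta$ and $\alpha\odot\bar\beta$.

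Given any pair of target phase vectors $u,w\in\mathbb{T}^d$, we solve $\alpha\odot\beta=u$ and $\alpha\odot\bar\beta=w$ coordinatewise. This gives $\alpha_i^2=u_iw_i$ and $\beta_i^2=u_i\bar w_i$, so we choose $\alpha_i$ as a square root of $u_iw_i$ and set $\beta_i=u_i\bar\alpha_i$. Then $\alpha_i\bar\beta_i=\alpha_i^2\bar u_i=w_i$, as required. Hence every triple $(pp^T,D_puu^*D_p,D_pww^*D_p)$ with $u,w$ unimodular is in $\TCP_d$, realized by a single column ($r=1$).

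Next we use the product measure. Index columns by pairs $(\mu,\nu)$ and take as column $(\mu,\nu)$ of $V$ and $W$ the vectors $\sqrt{p_\mu q_\nu}^{1/2}\,v^{(\mu\nu)}$ and $\sqrt{p_\mu q_\nu}^{1/2}\,w^{(\mu\nu)}$, respectively. Here $v^{(\mu\nu)},w^{(\mu\nu)}$ are built as above from the phase pair $(u_\mu,w_\nu)$. Each of the three expressions \eqref{eq:8}--\eqref{eq:10} is a sum over columns of the single-column contributions, weighted by $p_\mu q_\nu$, since each entry of $V$ and $W$ carries a factor $(p_\mu q_\nu)^{1/4}$. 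The sums therefore give:
\begin{itemize}
\item for $A$: $\sum p_\mu q_\nu\,pp^T=pp^T$;
\item for $B$: $\sum_{\mu,\nu}p_\mu q_\nu D_pu_\mu u_\mu^*D_p=D_pRD_p$;
\item for $C$: $D_pSD_p$.
\end{itemize}
Equivalently, $\TCP_d$ is a convex cone, being closed under sums by concatenating columns, so a convex combination of the single-column triples suffices. The common-diagonal condition holds automatically, because all three matrices have diagonal $p_i^2$.

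The only delicate step is the phase-splitting $\alpha\odot\beta=u$, $\alpha\odot\bar\beta=w$. It works because the squaring map on $\mathbb{T}$ is surjective, so no compatibility condition between $R$ and $S$ is needed. Everything else is bookkeeping of the weights $(p_\mu q_\nu)^{1/4}$.
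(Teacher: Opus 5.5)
Your proof is correct and takes the same approach as the paper. The paper also builds one column pair per $(\mu,\nu)$ with entries $(\alpha_\mu\beta_\nu)^{1/4}\sqrt{p_i}\,e^{i(\theta_{\mu i}\pm\varphi_{\nu i})/2}$, which is exactly your square-root phase splitting $\alpha\odot\beta=u_\mu$, $\alpha\odot\bar\beta=w_\nu$, and then sums using that the mixing weights of $R$ and $S$ each add to one. The only blemish is notational: you still write $p_\mu$ for the weights of $R$, which clashes with the vector $p$, and you use $w$ both for the phase vectors of $S$ and for the columns of $W$.
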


\begin{proof}
Choose toroidal decompositions
\begin{equation}
R=\sum_{\mu}\alpha_{\mu}u_{\mu}u_{\mu}^{*},\qquad
S=\sum_{\nu}\beta_{\nu}v_{\nu}v_{\nu}^{*},
\label{eq:22}
\end{equation}
with $|u_{\mu i}|=|v_{\nu i}|=1$. Write
\begin{equation}
u_{\mu i}=e^{i\theta_{\mu i}},\qquad
v_{\nu i}=e^{i\varphi_{\nu i}},\qquad
\lambda_{\mu\nu}=\alpha_{\mu}\beta_{\nu}.
\label{eq:23}
\end{equation}
For every pair $(\mu,\nu)$, define vectors
\begin{align}
x^{(\mu\nu)}_i&=\lambda_{\mu\nu}^{1/4}\sqrt{p_i}\,
e^{i(\theta_{\mu i}+\varphi_{\nu i})/2},
\label{eq:24}\\
y^{(\mu\nu)}_i&=\lambda_{\mu\nu}^{1/4}\sqrt{p_i}\,
e^{i(\theta_{\mu i}-\varphi_{\nu i})/2}.
\label{eq:25}
\end{align}
Then
\begin{equation}
x^{(\mu\nu)}\odot\overline{x^{(\mu\nu)}}
=
y^{(\mu\nu)}\odot\overline{y^{(\mu\nu)}}
=
\sqrt{\lambda_{\mu\nu}}\,p,
\label{eq:26}
\end{equation}
because the phase factors cancel entrywise. Moreover,
\begin{equation}
x^{(\mu\nu)}\odot y^{(\mu\nu)}
=
\sqrt{\lambda_{\mu\nu}}\,D_pu_{\mu},
\label{eq:27}
\end{equation}
since
\[
e^{i(\theta_{\mu i}+\varphi_{\nu i})/2}
e^{i(\theta_{\mu i}-\varphi_{\nu i})/2}
=
e^{i\theta_{\mu i}},
\]
and
\begin{equation}
x^{(\mu\nu)}\odot\overline{y^{(\mu\nu)}}
=
\sqrt{\lambda_{\mu\nu}}\,D_pv_{\nu},
\label{eq:28}
\end{equation}
since
\[
e^{i(\theta_{\mu i}+\varphi_{\nu i})/2}
e^{-i(\theta_{\mu i}-\varphi_{\nu i})/2}
=
e^{i\varphi_{\nu i}}.
\]

The contribution of the pair $(\mu,\nu)$ to the first TCP component is therefore
\[
\big(x^{(\mu\nu)}\odot\overline{x^{(\mu\nu)}}\big)
\big(y^{(\mu\nu)}\odot\overline{y^{(\mu\nu)}}\big)^{*}
=
\lambda_{\mu\nu}pp^{T}.
\]
Summing over $\mu,\nu$ gives
\[
\sum_{\mu,\nu}\lambda_{\mu\nu}pp^{T}
=
\left(\sum_{\mu}\alpha_{\mu}\right)
\left(\sum_{\nu}\beta_{\nu}\right)pp^{T}
=
pp^{T}.
\]
For the second component,
\begin{align*}
\sum_{\mu,\nu}
\big(x^{(\mu\nu)}\odot y^{(\mu\nu)}\big)
\big(x^{(\mu\nu)}\odot y^{(\mu\nu)}\big)^{*}
&=
D_p
\left(
\sum_{\mu,\nu}\lambda_{\mu\nu}u_{\mu}u_{\mu}^{*}
\right)
D_p\\
&=
D_pRD_p,
\end{align*}
because $\sum_{\nu}\beta_{\nu}=1$. Similarly,
\begin{align*}
\sum_{\mu,\nu}
\big(x^{(\mu\nu)}\odot\overline{y^{(\mu\nu)}}\big)
\big(x^{(\mu\nu)}\odot\overline{y^{(\mu\nu)}}\big)^{*}
&=
D_p
\left(
\sum_{\mu,\nu}\lambda_{\mu\nu}v_{\nu}v_{\nu}^{*}
\right)
D_p\\
&=
D_pSD_p,
\end{align*}
because $\sum_{\mu}\alpha_{\mu}=1$. Hence the three TCP components are exactly
\[
\big(pp^{T},D_pRD_p,D_pSD_p\big),
\]
which proves Eq.~\eqref{eq:21}.
\end{proof}

A second elementary component will absorb the remaining diagonal-sector mass.

\begin{lemma}[Classical residual]
\label{lem:2}
If $L\in M_d(\mathbb{R})$ is entrywise nonnegative, then
\begin{equation}
\big(L,\operatorname{Diag}(\operatorname{diag}L),\operatorname{Diag}(\operatorname{diag}L)\big)\in\TCP_d.
\label{eq:29}
\end{equation}
\end{lemma}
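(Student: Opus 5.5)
The plan is to write down an explicit TCP factorization with one column per entry of $L$, using real nonnegative vectors so that all phase information is trivial. For each pair $(k,l)$ with $L_{kl}>0$, I take the column pair
\[
v^{(kl)}=\sqrt{L_{kl}}\,e_k,\qquad w^{(kl)}=e_l .
\]
Here $e_k$ denotes the standard basis vector. Collecting these columns into $V,W\in M_{d\times r}$ with $r\le d^2$, I then verify Eqs.~\eqref{eq:8}--\eqref{eq:10} column by column, since each of the three TCP components is a sum over columns.

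For the first component, note that $v^{(kl)}\odot\overline{v^{(kl)}}=L_{kl}e_k$ and $w^{(kl)}\odot\overline{w^{(kl)}}=e_l$. The contribution of the pair is therefore $L_{kl}e_ke_l^{T}$, and summing over all pairs gives exactly $L$. For the second and third components, $v^{(kl)}\odot w^{(kl)}=v^{(kl)}\odot\overline{w^{(kl)}}=\sqrt{L_{kl}}\,\delta_{kl}e_k$. This Hadamard product vanishes unless $k=l$, and when $k=l$ its outer product is $L_{kk}e_ke_k^{T}$. Summing then gives $\operatorname{Diag}(\operatorname{diag}L)$ in both the $B$ and $C$ slots. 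The common-diagonal requirement~\eqref{eq:5} holds automatically, because the diagonal of $L$ matches the diagonal of $\operatorname{Diag}(\operatorname{diag}L)$.

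There is no real obstacle here. The only point needing care is the choice of columns: the first factors must be $\sqrt{L_{kl}}e_k$, not $L_{kl}$, so that the modulus-squared product reproduces $L_{kl}$. One also has to notice that the off-diagonal pairs $k\neq l$ contribute nothing to the coherence components, because $e_k\odot e_l=0$. As a remark, the same result follows from convexity: the TCP set is a convex cone, since concatenating the columns of two factorizations adds the triples. It therefore suffices to check each extreme ray $L=e_ke_l^{T}$, and that check is exactly the computation above.
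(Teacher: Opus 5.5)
Your proof is correct and is essentially the paper's argument. Both use one TCP atom per entry of $L$, supported on standard basis vectors, and in both the off-diagonal atoms drop out of the two coherence slots because $e_k\odot e_l=0$ for $k\neq l$; the only difference is that you place all the weight on $v$ (taking $v=\sqrt{L_{kl}}\,e_k$, $w=e_l$), whereas the paper splits it symmetrically as $v=L_{ij}^{1/4}e_i$, $w=L_{ij}^{1/4}e_j$.
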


\begin{proof}
For every $i\ne j$, the entry $L_{ij}$ is generated by the TCP atom
\begin{equation}
v=L_{ij}^{1/4}e_i,\qquad w=L_{ij}^{1/4}e_j.
\label{eq:30}
\end{equation}
Its first component is $L_{ij}E_{ij}$, while its second and third components vanish. For $i=j$, choose
\begin{equation}
v=w=L_{ii}^{1/4}e_i,
\label{eq:31}
\end{equation}
which contributes $L_{ii}E_{ii}$ to all three components. Summing the atoms proves Eq.~\eqref{eq:29}.
\end{proof}

\section{A sufficient PPT-squared criterion}

The toroidalization result needed below is a consequence of the rank bound for extreme complex correlation matrices. If $R$ is any $d\times d$ correlation matrix and
\begin{equation}
m=\lfloor\sqrt{d}\rfloor,
\label{eq:32}
\end{equation}
then
\begin{equation}
\widehat{R}=\frac{1}{m}R+\frac{m-1}{m}I
\label{eq:33}
\end{equation}
is toroidal \cite{KribsLevickPereiraRahaman2024}.

\begin{theorem}[Classical mixing criterion]
\label{thm:1}
Let $\Phi_{A,B,C}$ and $\Phi_{D,E,F}$ be PPT DOC maps on $M_d$, and let
\begin{equation}
m=\lfloor\sqrt{d}\rfloor.
\label{eq:34}
\end{equation}
If
\begin{equation}
(AD)_{ij}\ge2m\sqrt{A_{ii}A_{jj}D_{ii}D_{jj}}\quad\text{for all }i,j,
\label{eq:35}
\end{equation}
then
\begin{equation}
\Phi_{A,B,C}\circ\Phi_{D,E,F}\in\EB.
\label{eq:36}
\end{equation}
\end{theorem}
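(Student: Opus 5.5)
The plan is to decompose the composed triple $(G,H,K)$ from Eqs.~\eqref{eq:14}--\eqref{eq:16} as a sum of a toroidal core from Lemma~\ref{lem:1} and a classical residual from Lemma~\ref{lem:2}. We then conclude via Eq.~\eqref{eq:11}. This uses the fact that $\TCP_d$ is closed under addition, which is immediate from the definition by concatenating the columns of the matrices $V,W$ for the two summands.

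\emph{Step 1: isolate the coherence parts.} Consider
\[
H'=B\odot E+C\odot F^{T},\qquad K'=B\odot F+C\odot E^{T}.
\]
By the PPT conditions in Eq.~\eqref{eq:6}, the matrices $B,C,E,F$ are positive semidefinite, and so is $F^{T}$. The Schur product theorem then makes $H'$ and $K'$ positive semidefinite. By Eq.~\eqref{eq:5} and the common diagonal of $(D,E,F)$, both $H'$ and $K'$ have diagonal entries $2A_{ii}D_{ii}$. Set $q_i=\sqrt{2A_{ii}D_{ii}}$ and write
\[
H'=D_qRD_q,\qquad K'=D_qSD_q,
\]
where $R$ and $S$ are correlation matrices. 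If some $q_i=0$, the corresponding row and column of $H'$ and $K'$ vanish by positive semidefiniteness. In that case we set $R_{ii}=S_{ii}=1$ and put zeros elsewhere in that row and column, which keeps $R$ and $S$ correlation matrices. By Eqs.~\eqref{eq:17}--\eqref{eq:18}, the off-diagonal entries of $H$ and $K$ coincide with those of $H'$ and $K'$.

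\emph{Step 2: toroidalize.} Let $\widehat R=\frac1mR+\frac{m-1}{m}I$ and $\widehat S=\frac1mS+\frac{m-1}{m}I$. Both are toroidal by the Kribs--Levick--Pereira--Rahaman result quoted in Eq.~\eqref{eq:33}. Apply Lemma~\ref{lem:1} with $p=\sqrt m\,q$. This gives
\[
\big(m\,qq^{T},\;mD_q\widehat RD_q,\;mD_q\widehat SD_q\big)\in\TCP_d.
\]
Since $mD_q\widehat RD_q=D_qRD_q+(m-1)D_q^2$, the off-diagonal entries of the second component are exactly $H_{ij}$. Likewise, the off-diagonal entries of the third component are exactly $K_{ij}$. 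All three components have diagonal entries $m q_i^2=2mA_{ii}D_{ii}$.

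\emph{Step 3: classical residual.} Define $L=G-m\,qq^{T}$, so that
\[
L_{ij}=(AD)_{ij}-2m\sqrt{A_{ii}A_{jj}D_{ii}D_{jj}}.
\]
Hypothesis~\eqref{eq:35} says precisely that $L$ is entrywise nonnegative. Lemma~\ref{lem:2} then gives $(L,\operatorname{Diag}(\operatorname{diag}L),\operatorname{Diag}(\operatorname{diag}L))\in\TCP_d$. Adding this triple to the core from Step 2 reproduces $G$ in the first slot. In the second and third slots it keeps the off-diagonal entries $H_{ij}$ and $K_{ij}$, and it raises each diagonal entry to $2mA_{ii}D_{ii}+L_{ii}=G_{ii}$, which equals $H_{ii}=K_{ii}$ by Eq.~\eqref{eq:19}. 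Hence $(G,H,K)\in\TCP_d$, and Eq.~\eqref{eq:11} yields Eq.~\eqref{eq:36}.

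The main point needing care is the bookkeeping in Steps 1--2. One must choose the scaling $p=\sqrt m\,q$ so that the toroidalization identity exactly reproduces the off-diagonal coherences. One must also check that the diagonal shortfall is exactly $\operatorname{diag}L$, so that the single entrywise inequality~\eqref{eq:35} covers both the diagonal and off-diagonal residuals, including the degenerate indices with $q_i=0$.
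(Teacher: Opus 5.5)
Your proposal is correct and follows the paper's proof essentially step for step. It uses the same Schur-product matrices normalized by $q_i=\sqrt{2A_{ii}D_{ii}}$ (with an identity block where $q_i=0$), the same toroidal mixtures fed into Lemma~\ref{lem:1} with $p=\sqrt{m}\,q$, and the same residual $L=G-pp^{T}$ absorbed by Lemma~\ref{lem:2}. The only slip is minor: for $K'$ you also need $E^{T}=\overline{E}\succeq0$, which holds for the same reason as $F^{T}\succeq0$.
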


\begin{proof}
Let $(G,H,K)$ be the DOC triple of the composition. By Eq.~\eqref{eq:14},
\begin{equation}
G=AD.
\label{eq:37}
\end{equation}
Define
\begin{equation}
S_H=B\odot E+C\odot F^{T},\qquad
S_K=B\odot F+C\odot E^{T}.
\label{eq:38}
\end{equation}
Because $B,C,E,F$ are Hermitian positive semidefinite, their transposes satisfy $E^{T}=\overline{E}\succeq0$ and $F^{T}=\overline{F}\succeq0$. The Schur product theorem therefore gives
\begin{equation}
S_H\succeq0,\qquad S_K\succeq0.
\label{eq:39}
\end{equation}
The common-diagonal property of the DOC triples yields
\begin{equation}
(S_H)_{ii}=(S_K)_{ii}=2A_{ii}D_{ii}.
\label{eq:40}
\end{equation}
Set
\begin{equation}
q_i=\sqrt{2A_{ii}D_{ii}}.
\label{eq:41}
\end{equation}
On the support of $q$, define
\begin{equation}
R_{ij}=\frac{(S_H)_{ij}}{q_iq_j},\qquad
S_{ij}=\frac{(S_K)_{ij}}{q_iq_j}.
\label{eq:42}
\end{equation}
If $q_i=0$, positive semidefiniteness forces the corresponding row and column of both $S_H$ and $S_K$ to vanish. Hence $R$ and $S$ can be extended by an identity block on the zero support. The resulting matrices are correlation matrices.

Define their toroidal mixtures
\begin{equation}
\widehat{R}=\frac{1}{m}R+\frac{m-1}{m}I,\qquad
\widehat{S}=\frac{1}{m}S+\frac{m-1}{m}I.
\label{eq:43}
\end{equation}
Both are toroidal by Eq.~\eqref{eq:33}. Let
\begin{equation}
p_i=\sqrt{m}\,q_i=\sqrt{2mA_{ii}D_{ii}}.
\label{eq:44}
\end{equation}
Lemma~\ref{lem:1} implies
\begin{equation}
T_0=\big(pp^{T},\,D_p\widehat{R}D_p,\,D_p\widehat{S}D_p\big)\in\TCP_d.
\label{eq:45}
\end{equation}
For $i\ne j$,
\begin{equation}
(D_p\widehat{R}D_p)_{ij}=mq_iq_j\frac{R_{ij}}{m}=(S_H)_{ij}=H_{ij},
\label{eq:46}
\end{equation}
\begin{equation}
(D_p\widehat{S}D_p)_{ij}=mq_iq_j\frac{S_{ij}}{m}=(S_K)_{ij}=K_{ij}.
\label{eq:47}
\end{equation}
Thus the toroidal core reproduces every off-diagonal coherence of the composed triple.

Its first component satisfies
\begin{equation}
(pp^{T})_{ij}=2m\sqrt{A_{ii}A_{jj}D_{ii}D_{jj}}.
\label{eq:48}
\end{equation}
Define
\begin{equation}
L=G-pp^{T}.
\label{eq:49}
\end{equation}
The hypothesis \eqref{eq:35} is exactly the assertion that $L$ is entrywise nonnegative. Lemma~\ref{lem:2} therefore gives
\begin{equation}
T_1=\big(L,\operatorname{Diag}(\operatorname{diag}L),\operatorname{Diag}(\operatorname{diag}L)\big)\in\TCP_d.
\label{eq:50}
\end{equation}
Equations~\eqref{eq:46}, \eqref{eq:47}, and the common-diagonal identity \eqref{eq:19} imply the exact decomposition
\begin{equation}
(G,H,K)=T_0+T_1.
\label{eq:51}
\end{equation}
The TCP cone is convex, so $(G,H,K)\in\TCP_d$. Equation~\eqref{eq:11} then gives the entanglement-breaking conclusion.
\end{proof}

The theorem is notable for the absence of $B,C,E,F$ from Eq.~\eqref{eq:35}. The coherence data determine the two correlation matrices in Eq.~\eqref{eq:42}, but toroidalization absorbs them without a phase-by-phase estimate. The only remaining requirement concerns how strongly the diagonal-sector matrix product $AD$ mixes the basis populations.

For $d=4$, the dimension factor becomes $m=2$.

\begin{corollary}[Dimension four]
\label{cor:1}
Let $\Phi_{A,B,C}$ and $\Phi_{D,E,F}$ be PPT DOC maps on $M_4$. If
\begin{equation}
(AD)_{ij}\ge4\sqrt{A_{ii}A_{jj}D_{ii}D_{jj}}\quad\text{for all }i,j,
\label{eq:52}
\end{equation}
then their composition is EB.
\end{corollary}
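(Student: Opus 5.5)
This is a direct specialization of Theorem~\ref{thm:1}, so the plan is to verify its hypotheses for $d=4$ and compute the dimension factor. Since $\sqrt{4}=2$ is an integer, $m=\lfloor\sqrt{4}\rfloor=2$, and therefore $2m=4$. With this value, the hypothesis \eqref{eq:35} of Theorem~\ref{thm:1} reads $(AD)_{ij}\ge 4\sqrt{A_{ii}A_{jj}D_{ii}D_{jj}}$ for all $i,j$. This is literally the assumption \eqref{eq:52}. The maps $\Phi_{A,B,C}$ and $\Phi_{D,E,F}$ are PPT DOC maps on $M_4$ by hypothesis, so Theorem~\ref{thm:1} applies verbatim and gives $\Phi_{A,B,C}\circ\Phi_{D,E,F}\in\EB$.

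For completeness, I would also recall what the theorem's proof uses at $d=4$. The toroidalization step \eqref{eq:33} becomes $\widehat{R}=\tfrac12 R+\tfrac12 I$. This is the Kribs--Levick--Pereira--Rahaman statement that the midpoint between a $4\times4$ correlation matrix and the identity is toroidal. Lemma~\ref{lem:1} then applies with $p_i=\sqrt{4A_{ii}D_{ii}}=2\sqrt{A_{ii}D_{ii}}$, and Lemma~\ref{lem:2} absorbs the nonnegative residual $L=AD-pp^{T}$.

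There is essentially no obstacle here. The only point requiring care is the arithmetic of the floor: for $d=4$ it introduces no rounding loss, unlike $d=5,\dots,8$, where $m$ is also $2$ but $\sqrt{d}$ exceeds it. The bound $(pp^T)_{ij}=4\sqrt{A_{ii}A_{jj}D_{ii}D_{jj}}$ then matches \eqref{eq:52} exactly.
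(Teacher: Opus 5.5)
Your proposal is correct and takes the same approach as the paper: the corollary is simply Theorem~\ref{thm:1} with $m=\lfloor\sqrt{4}\rfloor=2$, so $2m=4$ and hypothesis \eqref{eq:35} becomes exactly \eqref{eq:52}. Your side remarks are also accurate, namely $\widehat{R}=\tfrac12 R+\tfrac12 I$, $p_i=2\sqrt{A_{ii}D_{ii}}$, and the residual $L=AD-pp^{T}$.
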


For an auto-composition with the same diagonal-sector matrix $A$, Eq.~\eqref{eq:52} becomes
\begin{equation}
(A^2)_{ij}\ge4A_{ii}A_{jj}.
\label{eq:53}
\end{equation}
The criterion is only sufficient. No claim of necessity is made. In particular, the coefficient four is saturated by the family of Sec.~V at $a=b=1$, where $M_1=J_4$ and $M_1M_1=4J_4$. This is saturation of the present certificate, not evidence that Eq.~\eqref{eq:52} is a necessary entanglement-breaking boundary.

\begin{corollary}[Rank-sensitive refinement]
\label{cor:2}
Let $R$ and $S$ be the correlation matrices defined on the nonzero support of $q$ in Eq.~\eqref{eq:42}, and set
\begin{equation}
\kappa=\min\big\{\lfloor\sqrt{d}\rfloor,\,\max\{\operatorname{rank}R,\operatorname{rank}S\}\big\}.
\label{eq:54}
\end{equation}
If
\begin{equation}
(AD)_{ij}\ge2\kappa\sqrt{A_{ii}A_{jj}D_{ii}D_{jj}}\quad\text{for all }i,j,
\label{eq:55}
\end{equation}
then $\Phi_{A,B,C}\circ\Phi_{D,E,F}$ is EB.
\end{corollary}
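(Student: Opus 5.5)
The plan is to rerun the proof of Theorem~\ref{thm:1} verbatim, changing only the mixing parameter $m=\lfloor\sqrt d\rfloor$ to $\kappa$. That requires a toroidalization statement of the form: if $R$ is a correlation matrix of rank $r$, then for every integer $k\ge\min\{\lfloor\sqrt d\rfloor, r\}$ the matrix $\frac1k R+\frac{k-1}{k}I$ is toroidal. With that in hand, set $p_i=\sqrt{\kappa}\,q_i$. Lemma~\ref{lem:1} then produces the core $T_0=(pp^T,D_p\widehat R D_p,D_p\widehat S D_p)$, now with $\kappa$ in place of $m$. Its off-diagonal coherences reproduce $H$ and $K$ exactly as in Eqs.~\eqref{eq:46}--\eqref{eq:47}, because $\kappa q_iq_j\cdot R_{ij}/\kappa=(S_H)_{ij}$. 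Hypothesis~\eqref{eq:55} makes $L=G-pp^T$ entrywise nonnegative, Lemma~\ref{lem:2} supplies $T_1$, and $(G,H,K)=T_0+T_1$ gives the EB conclusion through Eq.~\eqref{eq:11}.

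The substantive new ingredient is the rank bound: if $R$ has rank $r$, then $\frac1r R+\frac{r-1}{r}I$ is toroidal. I would prove it by averaging over phases.

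\begin{itemize}
\item Factor $R=X X^*$ with $X\in M_{d\times r}$ and rows $x_i$ satisfying $\|x_i\|=1$.
\item Choose a phase $\omega$ uniformly from the $r$-th roots of unity, or more generally uniform on the torus.
\item Set $w=\sum_{k=1}^r \omega^k e_k$, which gives $\mathbb E[ww^*]=I_r$.
\item Define $u_i=\langle x_i,w\rangle$. Then $\mathbb E[u u^*]=R$, but $|u_i|$ is not constant, so this does not yet give a toroidal decomposition.
\end{itemize}

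A cleaner route is the bound stated in Ref.~\cite{KribsLevickPereiraRahaman2024}. The $\lfloor\sqrt d\rfloor$ bound there comes from the fact that extreme complex correlation matrices have rank at most $\sqrt d$. The same argument should give the rank-$r$ bound, because a rank-$r$ correlation matrix is a convex combination of extreme correlation matrices of rank at most $r$. Each such extreme matrix $P$ has a Gram realization by unit vectors in $\mathbb C^r$, and for it the identity
\[
\frac1r P+\frac{r-1}{r}I=\mathbb E_{U}\big[\text{diagonal phase conjugations}\big]
\]
should hold. Concretely, I would take an orthonormal-basis or mutually-unbiased style decomposition: writing $P=\sum_{k}$ of rank-one pieces and averaging over the $r$-dimensional Fourier phases gives unimodular vectors. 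Since toroidal matrices are convex and the map $P\mapsto \frac1kP+\frac{k-1}kI$ is affine, the bound passes from extreme points to $R$.

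A second point is monotonicity in $k$. If $\frac1{k}R+\frac{k-1}{k}I$ is toroidal and $k'\ge k$, then
\[
\frac1{k'}R+\frac{k'-1}{k'}I=\frac{k}{k'}\Big(\frac1kR+\tfrac{k-1}kI\Big)+\Big(1-\frac{k}{k'}\Big)I .
\]
The right-hand side is a convex combination of toroidal matrices, since $I$ is toroidal as the uniform average of $uu^*$ over sign vectors. This lets a single $\kappa\ge$ the individual thresholds for $R$ and $S$ serve both, which explains the $\max$ over the two ranks in Eq.~\eqref{eq:54}. The $\min$ with $\lfloor\sqrt d\rfloor$ comes from Eq.~\eqref{eq:33} together with monotonicity. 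The identity padding on the zero support of $q$ may raise the rank of the extended matrices. This causes no difficulty: $R$ is toroidal exactly when $R\oplus I$ is, since one can take a direct product of decompositions with independent unimodular padding, so the support rank is the right quantity.

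The main obstacle I expect is justifying the rank-$r$ toroidalization constant $1/r$ rigorously, if it is not stated verbatim in Ref.~\cite{KribsLevickPereiraRahaman2024}. I would first look for it there as a byproduct of their extreme-point rank argument. Failing that, I would prove the rank-one and rank-$r$ extreme cases directly by a Fourier phase average over $\mathbb Z_r^{\,r}$ applied to the Gram vectors.
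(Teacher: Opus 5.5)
Your route is the paper's. You rerun Theorem~\ref{thm:1} with $m$ replaced by $\kappa$, take the rank-$r$ toroidalization from Ref.~\cite{KribsLevickPereiraRahaman2024} as input, use monotonicity in the denominator so that one $\kappa$ serves both $R$ and $S$, and handle the zero support of $q$. Your support argument is a valid replacement for the paper's device. You use that toroidality of $\widehat R$ is equivalent to that of $\widehat R\oplus I$, via independent padding phases. The paper instead extends the phase vectors arbitrarily off $I_q$, which works because $p$ vanishes there.

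\textbf{The missing step is the degenerate case $I_q=\varnothing$.} There $R$ and $S$ are empty, $\kappa=0$, and $\frac1\kappa R+\frac{\kappa-1}{\kappa}I$ is undefined, so the rerun of Theorem~\ref{thm:1} does not apply. The paper treats it separately: $(S_H)_{ii}=(S_K)_{ii}=0$ together with $S_H,S_K\succeq0$ forces $S_H=S_K=0$. The composed triple is then $(G,\operatorname{Diag}(\operatorname{diag}G),\operatorname{Diag}(\operatorname{diag}G))$, which is TCP by Lemma~\ref{lem:2}. For $I_q\ne\varnothing$ the ranks are at least one, so $\kappa\ge1$ and your argument goes through.

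\textbf{Your fallback for the rank bound is only a heuristic.} The Fourier or mutually-unbiased phase averages over extreme points are not an argument, and the extreme-point reduction is unnecessary. If you want the proposal self-contained, here is a direct proof.
\begin{itemize}
\item Write $T=XX^*$ and let $c_1,\dots,c_r$ be the columns of $X$. Then $\sum_k|c_{k,l}|^2=1$, so $|c_{k,l}|\le1$ for every coordinate $l$.
\item For fixed $k$, choose independent unimodular random variables $\xi_{k,l}$ with $\mathbb{E}\,\xi_{k,l}=c_{k,l}$; a two-point distribution on the circle suffices.
\item Then $\mathbb{E}[\xi_k\xi_k^*]=c_kc_k^*+I-\operatorname{Diag}(|c_{k,1}|^2,\dots,|c_{k,n}|^2)$ is toroidal.
\item The uniform average over $k$ equals exactly $\frac1rT+\frac{r-1}{r}I$.
\end{itemize}
With the degenerate case added, your proposal is a complete proof matching the paper, which simply cites the rank bound.
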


\begin{proof}
Let $I_q=\{i:q_i>0\}$ be the nonzero support of $q$. If $I_q=\varnothing$, then $S_H=S_K=0$ because both matrices are positive semidefinite with zero diagonal. Hence $H=K=\operatorname{Diag}(\operatorname{diag}G)$, and Lemma~\ref{lem:2}, applied directly to the entrywise nonnegative matrix $G=AD$, proves that the composed triple is TCP.

Assume now that $I_q\ne\varnothing$, and let $n=|I_q|$. For a correlation matrix $T$ of rank $r$, Ref.~\cite{KribsLevickPereiraRahaman2024} shows that
\[
\frac{1}{r}T+\frac{r-1}{r}I
\]
is toroidal. On the active support, the same reference gives the universal denominator $\lfloor\sqrt{n}\rfloor$, and $\lfloor\sqrt{n}\rfloor\le m=\lfloor\sqrt{d}\rfloor$. If a toroidal mixture is available with denominator $k_0$, then every larger denominator $k\ge k_0$ is also valid because the new mixture is a convex combination of the old mixture and $I$, which is toroidal. Thus a valid denominator for each active-support correlation matrix $T\in\{R,S\}$ is $\min\{\lfloor\sqrt{n}\rfloor,\operatorname{rank}T\}$, and the single denominator $\kappa$ in Eq.~\eqref{eq:54} is at least this value for both $R$ and $S$. Therefore $\kappa$ works simultaneously for the two matrices on $I_q$.

Choose toroidal decompositions of the corresponding $\kappa$-mixtures on $I_q$. Each unimodular phase vector in those decompositions may be extended arbitrarily to the complementary coordinates while preserving unit modulus. These extensions produce toroidal correlation matrices on all of $M_d$ whose principal submatrices on $I_q$ are the desired mixtures. Since $p_i=\sqrt{\kappa}\,q_i$ vanishes outside $I_q$, the matrices $D_p\widehat{R}D_p$ and $D_p\widehat{S}D_p$ depend only on those active principal submatrices. Lemma~\ref{lem:1} is therefore applicable in dimension $d$ exactly as in Theorem~\ref{thm:1}. Repeating that proof with $m$ replaced by $\kappa$ proves the claim.
\end{proof}

\begin{remark}[A strict gain over the universal coefficient]
\label{rem:1}
The rank refinement can certify examples missed by the universal $d=4$ coefficient. Let
\begin{equation}
A=D=
\begin{pmatrix}
2/5 & 3/5 & 0 & 0\\
3/5 & 2/5 & 0 & 0\\
0 & 0 & 0 & 1\\
0 & 0 & 1 & 0
\end{pmatrix}
\label{eq:56}
\end{equation}
and
\begin{equation}
B=C=E=F=\frac{2}{5}
\begin{pmatrix}
1 & 1 & 0 & 0\\
1 & 1 & 0 & 0\\
0 & 0 & 0 & 0\\
0 & 0 & 0 & 0
\end{pmatrix}.
\label{eq:57}
\end{equation}
The corresponding DOC maps are trace preserving, unital, and PPT by Eqs.~\eqref{eq:6}--\eqref{eq:7}. On the nonzero support of $q$, Eq.~\eqref{eq:42} gives $R=S=J_2$, so $\operatorname{rank}R=\operatorname{rank}S=1$ and $\kappa=1$. Meanwhile,
\begin{equation}
(A^2)_{11}=\frac{13}{25},\qquad(A^2)_{12}=\frac{12}{25}.
\label{eq:58}
\end{equation}
The universal four-dimensional test would require these active entries to be at least $4(2/5)^2=16/25$, and therefore fails. Corollary~\ref{cor:2} requires only $2(2/5)^2=8/25$, which is satisfied. Thus the rank information produces a genuine enlargement of the certified region rather than a merely formal change of constants.
\end{remark}

\section{A continuous family in dimension four}

The criterion can be applied to channels that are genuinely PPT but not EB. Let $a>0$ and define
\begin{equation}
r_a=\frac{a+a^{-1}}{2},\qquad s_a=1+3r_a,
\label{eq:59}
\end{equation}
with
\begin{equation}
M_a=
\begin{pmatrix}
1 & a & a^{-1} & r_a\\
a^{-1} & 1 & a & r_a\\
a & a^{-1} & 1 & r_a\\
r_a & r_a & r_a & 1
\end{pmatrix}.
\label{eq:60}
\end{equation}
Every row and every column of $M_a$ sums to $s_a$. Let $Z\in M_4$ be an arbitrary correlation matrix. In particular, the coherence sector $C_{a,Z}$ is not fixed to the all-ones choice and may vary over the full correlation-matrix set. Define
\begin{equation}
\Phi_{a,Z}=\Phi_{A_a,B_a,C_{a,Z}},
\label{eq:61}
\end{equation}
where
\begin{equation}
A_a=\frac{M_a}{s_a},\qquad B_a=\frac{J_4}{s_a},\qquad C_{a,Z}=\frac{Z}{s_a}.
\label{eq:62}
\end{equation}

\begin{proposition}
\label{prop:1}
For every $a>0$ and every correlation matrix $Z$, the map $\Phi_{a,Z}$ is a trace-preserving unital PPT channel. If $a\ne1$, it is not EB.
\end{proposition}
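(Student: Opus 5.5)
The first part of Proposition~\ref{prop:1} is a direct check of Eqs.~\eqref{eq:5}--\eqref{eq:7}. The second part I would prove by showing that any TCP decomposition \eqref{eq:8}--\eqref{eq:10} of $(A_a,B_a,C_{a,Z})$ forces an equality case in Cauchy--Schwarz, and that this equality case is impossible when $a\ne1$.

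\emph{Step 1: channel and PPT properties.} $M_a$, $J_4$ and $Z$ all have unit diagonal, so after division by $s_a$ the triple has the common diagonal required by Eq.~\eqref{eq:5}. Every row and column of $M_a$ sums to $s_a$, so $A_a$ is doubly stochastic, which gives trace preservation and unitality. For the PPT conditions \eqref{eq:6}:
\begin{itemize}
\item $A_a\ge0$ entrywise because $a>0$ and $r_a>0$.
\item $B_a\succeq0$ since $J_4\succeq0$, and $C_{a,Z}\succeq0$ since $Z$ is a correlation matrix.
\end{itemize}
For Eq.~\eqref{eq:7}, every off-diagonal entry satisfies $|(B_a)_{ij}|=1/s_a$ and $|(C_{a,Z})_{ij}|\le 1/s_a$, because $|Z_{ij}|\le1$ for a correlation matrix. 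Meanwhile $(A_a)_{ij}(A_a)_{ji}s_a^2$ equals $a\cdot a^{-1}=1$ on the pairs inside $\{1,2,3\}$, and equals $r_a^2\ge1$ (by AM--GM) on pairs involving index $4$. So Eq.~\eqref{eq:7} holds.

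\emph{Step 2: a Cauchy--Schwarz necessary condition for TCP.} Suppose $(A_a,B_a,C_{a,Z})\in\TCP_4$, with columns (atoms) $v_k,w_k$ of $V,W$. From Eqs.~\eqref{eq:8}--\eqref{eq:9},
\[
|B_{ij}|=\Big|\sum_k v_{ki}w_{ki}\overline{v_{kj}w_{kj}}\Big|
\le\sum_k |v_{ki}||w_{kj}|\cdot|v_{kj}||w_{ki}|
\le\sqrt{A_{ij}A_{ji}},
\]
since $A_{ij}=\sum_k|v_{ki}|^2|w_{kj}|^2$. For $i,j\in\{1,2,3\}$ both sides equal $1/s_a$, so every inequality in this chain is an equality.

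\emph{Step 3: reading off the equality case.} Equality in Cauchy--Schwarz forces the vectors $(|v_{ki}||w_{kj}|)_k$ and $(|v_{kj}||w_{ki}|)_k$ to be proportional. Comparing squared norms identifies the constant as $\sqrt{A_{ij}/A_{ji}}$. Using the cyclic pattern $A_{12}=A_{23}=A_{31}=a/s_a$ and $A_{21}=A_{32}=A_{13}=a^{-1}/s_a$, we get for every atom $k$:
\[
|v_{k1}w_{k2}|=a|v_{k2}w_{k1}|,\qquad
|v_{k2}w_{k3}|=a|v_{k3}w_{k2}|,\qquad
|v_{k3}w_{k1}|=a|v_{k1}w_{k3}|.
\]
Multiplying the three identities gives $P_k=a^3P_k$, where $P_k=\prod_{i\le3}|v_{ki}w_{ki}|$. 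Hence $P_k=0$ for every $k$ when $a\ne1$. So each vector $z_k=(v_k\odot w_k)|_{\{1,2,3\}}$ has at least one zero coordinate.

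\emph{Step 4: contradiction with the rank of $B_a$.} The principal $3\times3$ block of $B_a$ is $J_3/s_a=\sum_k z_kz_k^*$. This block has rank one with range spanned by $(1,1,1)^T$. A sum of positive semidefinite rank-one terms equal to it forces each $z_k$ to be a multiple of $(1,1,1)^T$. Since each $z_k$ has a zero coordinate, every $z_k$ must vanish. Then the block would be zero, a contradiction. Therefore the triple is not TCP, and by Eq.~\eqref{eq:11} the map $\Phi_{a,Z}$ is not EB.

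The main obstacle is Step 3. The proportionality constants have to be tracked carefully so that their product around the cycle is $a^3$ and not $1$; this is exactly where the cyclic asymmetry of $M_a$ enters. Atoms with zero entries also have to be handled, and Step 4 is designed to take care of them.
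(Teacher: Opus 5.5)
Your proof is correct, but for the non-EB half it takes a genuinely different route from the paper. The paper argues by reduction and citation. If the triple were TCP, then $(A_a,B_a)$ would be PCP, and PCP passes to the principal indices $\{1,2,3\}$. The restricted pair $(\widetilde{X}_a,J_3)$ is identified with the CLDUI state $\rho_{a,a'}$ of Britz and Laurent, whose Theorem~3.21 gives separability exactly when $a\ge1$ and $a'\ge1$; with $a'=a^{-1}$ this fails unless $a=1$.

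You instead work directly with the atoms of a putative TCP decomposition, and each step checks out. The inequality chain that shows TCP implies Eq.~\eqref{eq:7} is saturated on the $3\times3$ block, because $(A_a)_{ij}(A_a)_{ji}=|(B_a)_{ij}|^2$ there. The Cauchy--Schwarz equality case then gives per-atom identities with constant $\sqrt{A_{ij}/A_{ji}}=a$. Multiplying around the cycle $1\to2\to3\to1$ yields $P_k=a^3P_k$, so $P_k=0$ when $a\ne1$. Finally, the rank-one block $J_3/s_a$ forces each $z_k$ to be a multiple of $(1,1,1)^T$, hence either zero or of full support, which also disposes of atoms with vanishing entries.

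Your route is self-contained. It does not depend on an external theorem, or on matching the CLDUI parametrization and conventions of that reference to the DOC triple, and it only uses $A$ and $B$, so it proves non-PCP of the restricted pair directly. It also exposes the mechanism, which generalizes at once. Suppose a principal block has a cycle along which Eq.~\eqref{eq:7} is saturated with $|B_{ij}|>0$, the $B$-block is rank one and nowhere vanishing, and $\prod_{\mathrm{cycle}}A_{ij}/A_{ji}\ne1$. Then the triple cannot be TCP. The paper's citation is shorter and explicitly ties the family to the PPT-entangled CLDUI inputs of the semidefinite-hierarchy tests, which supports its narrative. It also carries the full separability characterization, although only the non-separable direction is needed.
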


\begin{proof}
The row and column sums of $A_a$ are one, so the map is unital and trace preserving. Both $B_a$ and $C_{a,Z}$ are positive semidefinite. For indices $i,j\in\{1,2,3\}$ with $i\ne j$,
\begin{equation}
(M_a)_{ij}(M_a)_{ji}=1.
\label{eq:63}
\end{equation}
For a pair containing index four,
\begin{equation}
(M_a)_{i4}(M_a)_{4i}=r_a^2\ge1.
\label{eq:64}
\end{equation}
Since every correlation matrix satisfies $|Z_{ij}|\le1$,
\begin{equation}
(A_a)_{ij}(A_a)_{ji}\ge\max\{|(B_a)_{ij}|^2,|(C_{a,Z})_{ij}|^2\}.
\label{eq:65}
\end{equation}
The PPT characterization \eqref{eq:6}--\eqref{eq:7} proves the first statement.

It remains to exclude EB when $a\ne1$. If the full triple in Eq.~\eqref{eq:62} were TCP, then the pair $(A_a,B_a)$ would be PCP, and the same would remain true after restriction to the principal indices $1,2,3$ \cite{SinghNechita2021Quantum}. Up to the positive factor $s_a^{-1}$, the restricted pair is
\begin{equation}
\widetilde{X}_a=
\begin{pmatrix}
1 & a & a^{-1}\\
a^{-1} & 1 & a\\
a & a^{-1} & 1
\end{pmatrix},\qquad J_3.
\label{eq:66}
\end{equation}
Britz and Laurent studied the corresponding conjugate local diagonal unitary invariant (CLDUI) state $\rho_{a,a'}$ and proved in Theorem~3.21 of Ref.~\cite{BritzLaurent2025} that it is separable exactly when $a\ge1$ and $a'\ge1$. Here $a'=a^{-1}$, so both conditions hold simultaneously only for $a=1$. Therefore $a\ne1$ implies that the restricted pair is not PCP, hence the full triple is not TCP. Equation~\eqref{eq:11} shows that $\Phi_{a,Z}\notin\EB$.
\end{proof}

The next lemma supplies the classical mixing inequality needed by Corollary~\ref{cor:1}.

\begin{lemma}
\label{lem:3}
For every $a,b>0$,
\begin{equation}
M_aM_b\ge4J_4
\label{eq:67}
\end{equation}
entrywise.
\end{lemma}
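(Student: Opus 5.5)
The plan is to compute the sixteen entries of $M_aM_b$ directly and bound each one below by $4$. The only tools needed are the elementary inequalities $x+x^{-1}\ge2$ for $x>0$ and $r_a\ge1$, together with AM--GM. The cyclic structure of the upper-left $3\times3$ block reduces the work to a few representative entries. That block is a circulant: row $1$ is $(1,a,a^{-1})$ and rows $2,3$ are its cyclic shifts. The fourth row and column are constant equal to $r_a$ off the diagonal.

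\textbf{Entries $(4,4)$, $(i,4)$ and $(4,j)$.} First,
\[
(M_aM_b)_{44}=3r_ar_b+1\ge4 .
\]
For $i\le3$, row $i$ of the upper block sums to $1+a+a^{-1}=1+2r_a$. Hence
\[
(M_aM_b)_{i4}=(1+2r_a)r_b+r_a\ge3+1=4,
\]
and symmetrically $(M_aM_b)_{4j}=r_a(1+2r_b)+r_b\ge4$.

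\textbf{Entries with $i,j\le3$.} Here
\[
(M_aM_b)_{ij}=(\widetilde X_a\widetilde X_b)_{ij}+r_ar_b .
\]
Since $r_ar_b\ge1$, it suffices to show $(\widetilde X_a\widetilde X_b)_{ij}\ge3$. The product of two circulants is again a circulant, so only row $1$ needs checking. Its three entries are
\[
1+ab^{-1}+a^{-1}b,\qquad b+a+a^{-1}b^{-1},\qquad b^{-1}+ab+a^{-1}.
\]
The first is at least $1+2=3$. The second and third are sums of three positive terms whose product is $1$, so AM--GM bounds each below by $3$. This gives $M_aM_b\ge4J_4$ entrywise.

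The only step needing care is making sure the circulant bookkeeping produces the correct three expressions, in particular that each off-diagonal product has three factors multiplying to $1$. I would verify this by writing out row $1$ against the columns of $\widetilde X_b$ explicitly. Equality holds at $a=b=1$, consistent with the saturation remark after Corollary~\ref{cor:1}.
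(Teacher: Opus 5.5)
Your proof is correct and takes essentially the same approach as the paper: a direct entrywise computation, using $x+x^{-1}\ge2$ and $r_ar_b\ge1$ together with AM--GM on the three-term sums with product one in the leading $3\times3$ block, and row/column-sum bounds for the fourth row and column. Your circulant observation is simply a cleaner justification of the paper's claim that only three entry types occur in that block.
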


\begin{proof}
Inside the leading $3\times3$ block, the possible entry types are
\begin{equation}
1+\frac{a}{b}+\frac{b}{a}+r_ar_b,
\label{eq:68}
\end{equation}
\begin{equation}
a+b+\frac{1}{ab}+r_ar_b,
\label{eq:69}
\end{equation}
and
\begin{equation}
ab+\frac{1}{a}+\frac{1}{b}+r_ar_b.
\label{eq:70}
\end{equation}
The first is at least four because $a/b+b/a\ge2$ and $r_ar_b\ge1$. In Eq.~\eqref{eq:69}, the first three displayed positive terms have product one, so their sum is at least three by AM--GM, and again $r_ar_b\ge1$. The same argument applies to Eq.~\eqref{eq:70}.

For an entry in the fourth column with row among $1,2,3$,
\begin{equation}
(M_aM_b)_{i4}=r_b(1+a+a^{-1})+r_a\ge4.
\label{eq:71}
\end{equation}
The fourth row is analogous, and
\begin{equation}
(M_aM_b)_{44}=3r_ar_b+1\ge4.
\label{eq:72}
\end{equation}
Thus every entry of $M_aM_b$ is at least four.
\end{proof}

\begin{theorem}[Pairwise composition of the family]
\label{thm:2}
For all $a,b>0$ and for arbitrary correlation matrices $Z,W\in M_4$,
\begin{equation}
\Phi_{a,Z}\circ\Phi_{b,W}\in\EB.
\label{eq:73}
\end{equation}
In particular, if $a\ne1$ and $b\ne1$, both factors are PPT and non-EB while their composition is EB.
\end{theorem}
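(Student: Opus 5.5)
The plan is to apply Corollary~\ref{cor:1} directly to the pair $\Phi_{a,Z}=\Phi_{A_a,B_a,C_{a,Z}}$ and $\Phi_{b,W}=\Phi_{A_b,B_b,C_{b,W}}$. Proposition~\ref{prop:1} already shows that both maps are PPT DOC maps on $M_4$ for every $a,b>0$ and every pair of correlation matrices $Z,W$. So the only hypothesis left to check is the entrywise inequality \eqref{eq:52} with $A=A_a$ and $D=A_b$.

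First I will compute the diagonal data. Every diagonal entry of $M_a$ and of $M_b$ equals one, so $(A_a)_{ii}=1/s_a$ and $(A_b)_{ii}=1/s_b$ for all $i$. The right-hand side of \eqref{eq:52} is therefore
\[
4\sqrt{(A_a)_{ii}(A_a)_{jj}(A_b)_{ii}(A_b)_{jj}}=\frac{4}{s_as_b}
\]
for every $i,j$. The left-hand side is $(A_aA_b)_{ij}=(M_aM_b)_{ij}/(s_as_b)$. After multiplying by $s_as_b>0$, the criterion reduces exactly to $M_aM_b\ge4J_4$ entrywise, which is Lemma~\ref{lem:3}. Corollary~\ref{cor:1} then gives $\Phi_{a,Z}\circ\Phi_{b,W}\in\EB$.

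None of $B_a$, $C_{a,Z}$, $B_b$, $C_{b,W}$ enters this argument, consistent with the remark after Theorem~\ref{thm:1}. This is why $Z$ and $W$ may be completely arbitrary correlation matrices: they only need to make the factors PPT, which Proposition~\ref{prop:1} already secured.

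For the ``in particular'' clause, I will cite Proposition~\ref{prop:1} again: for $a\ne1$ and $b\ne1$, each factor is a PPT channel that is not EB, and the first part shows their composition is EB.

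There is no real obstacle, since the difficult work sits in Lemma~\ref{lem:3} and in Theorem~\ref{thm:1}. The only point needing care is confirming that the normalization constants $s_a,s_b$ cancel identically on both sides. This follows from the unit diagonal of $M_a$ and $M_b$, and no further quantitative estimate is needed.
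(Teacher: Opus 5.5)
Your proposal is correct and is essentially the paper's own proof. The unit diagonals of $M_a,M_b$ make the right-hand side of \eqref{eq:52} equal to $4/(s_as_b)$, this cancels against $(A_aA_b)_{ij}=(M_aM_b)_{ij}/(s_as_b)$, and Lemma~\ref{lem:3}, Corollary~\ref{cor:1} and Proposition~\ref{prop:1} then finish exactly as you describe, with your explicit use of Proposition~\ref{prop:1} to verify the PPT DOC hypothesis being a point the paper leaves implicit.
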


\begin{proof}
For the two diagonal-sector matrices,
\begin{equation}
A_aA_b=\frac{M_aM_b}{s_as_b}.
\label{eq:74}
\end{equation}
Every diagonal entry of $A_a$ equals $s_a^{-1}$, and every diagonal entry of $A_b$ equals $s_b^{-1}$. The right-hand side of the four-dimensional criterion \eqref{eq:52} is therefore
\begin{equation}
4\sqrt{(A_a)_{ii}(A_a)_{jj}(A_b)_{ii}(A_b)_{jj}}=\frac{4}{s_as_b}.
\label{eq:75}
\end{equation}
Lemma~\ref{lem:3} gives
\begin{equation}
(A_aA_b)_{ij}\ge\frac{4}{s_as_b}
\label{eq:76}
\end{equation}
for every $i,j$. Corollary~\ref{cor:1} proves Eq.~\eqref{eq:73}. Proposition~\ref{prop:1} proves the final statement.
\end{proof}

The family contains an exact analytic version of the type of PPT-entangled LDOI inputs used in recent semidefinite-hierarchy tests of PPT-squared behavior \cite{BritzLaurent2025}. The matrices differ in the fourth row and column because Eq.~\eqref{eq:60} is chosen so that the resulting DOC maps are simultaneously trace preserving and unital. The upper-left $3\times3$ obstruction to separability is retained, while the choice $r_a=(a+a^{-1})/2$ supplies enough classical mixing to satisfy Eq.~\eqref{eq:52} for every pair of parameters. Importantly, this conclusion is uniform over the arbitrary correlation matrices $Z$ and $W$ entering the two coherence sectors, so the construction is not tied to the special choice $Z=W=J_4$.

\section{Relation to other certificates}

The result is complementary to existing approaches to PPT-squared composition. Choi-type and diagonal-covariant cases have been attacked through explicit cone factorizations and factor-width arguments \cite{SinghNechita2022,NechitaPark2026}. More specifically, Proposition~6.2 of Ref.~\cite{NechitaPark2026} gives deterministic sufficient conditions for DOC compositions that retain explicit pairwise coherence combinations involving $B,C,E,F$. By contrast, the condition in Eq.~\eqref{eq:35} depends only on the diagonal-sector matrices $A$ and $D$. No general dominance relation between the two certificates is claimed. Semidefinite hierarchies provide another route to separability certification and can exploit the LDOI block structure efficiently \cite{BritzLaurent2025}. The general Doherty--Parrilo--Spedalieri (DPS) framework originates from symmetric-extension criteria for separability and remains one of the standard systematic approximations to the separable cone.

The toroidal-core proof uses a different decomposition. Rather than estimating each pairwise coherence separately, Eqs.~\eqref{eq:38}--\eqref{eq:43} package all off-diagonal data into two correlation matrices. The universal mixing result of Ref.~\cite{KribsLevickPereiraRahaman2024} then turns those correlation matrices into rank-one phase mixtures, while Lemma~\ref{lem:1} lifts the two mixtures simultaneously into a single TCP object. The residual in Lemma~\ref{lem:2} carries no off-diagonal quantum coherence and is controlled entirely by the entrywise inequality \eqref{eq:35}.

This mechanism explains both the strength and the limitation of the criterion. It is insensitive to the detailed phases of $B,C,E,F$, which makes it robust over continuous channel families. The price is that the universal choice $m=\lfloor\sqrt{d}\rfloor$ is a worst-case toroidalization guarantee, not an optimized parameter for a given composition. Corollary~\ref{cor:2} and Remark~\ref{rem:1} make this loss explicit. In the displayed four-dimensional example the universal coefficient is four, while the actual coherence ranks reduce it to two; the universal test fails and the rank-sensitive test succeeds. This also clarifies the status of tightness. Theorem~\ref{thm:1} is a sufficient condition, and no necessity claim or general optimality claim for its coefficient is made.

The classical residual also suggests possible refinements when additional symmetry or positivity is available. In dimensions at most four, doubly nonnegative and completely positive matrices coincide, whereas the cones separate in higher dimension \cite{Diananda1962,MaxfieldMinc1962,BermanShakedMonderer2003}. The present proof does not require the residual $L$ to be symmetric or positive semidefinite; entrywise nonnegativity alone is sufficient for Lemma~\ref{lem:2}. Nevertheless, the low-dimensional coincidence may be useful for stronger four-dimensional DOC (DOC4) certificates in subclasses where the residual has additional symmetric positive-semidefinite structure. Such extensions are left outside the present scope.

Several neighboring literatures address different forms of entanglement degradation. One-shot entanglement distribution under local noise has been studied through singlet fraction and related optimization questions \cite{Pal2014Singlet,Streltsov2015Unified,SiddhuSmolin2023Optimal}. Source placement is a distinct architectural problem \cite{MasajadaFellousStreltsov2026}. Resource-oriented analyses include entanglement cost, distribution by separable states, discord-assisted distribution, and excessive distribution \cite{Streltsov2012QuantumCost,Cubitt2003Separable,Chuan2012Discord,Zuppardo2016Excessive}. Entanglement-annihilating maps provide another nearby notion \cite{MoravcikovaZiman2010,FilippovRybarZiman2012,FilippovZiman2013}, with further structural developments in Refs.~\cite{LamiHuber2016,AubrunMullerHermes2023}. Recent generalizations based on Schmidt-number degradation and ordered-cone methods extend this hierarchy \cite{MallickGangulyMajumdar2026,LaPianaMullerHermes2026}. Negativity remains a useful diagnostic in numerical studies even though the present arguments use exact cone membership \cite{VidalWerner2002}.

The broader channel literature also contains methods for transferring nonunitary evolution to better structured representatives. Canonical descriptions of quantum channels and unital maps are developed in Refs.~\cite{FujiwaraAlgoet1999,RuskaiSzarekWerner2002,BraunEtAl2014,ChoiLi2023}. Sinkhorn-type scaling and invertible local filtering provide another route to normal forms \cite{Sinkhorn1964,Gurvits2004,GeorgiouPavon2015,Cariello2019}. Their use in quantum-channel and entanglement problems is illustrated in Refs.~\cite{Filippov2021Sinkhorn,VerstraeteDehaeneDeMoor2001,VerstraeteDehaeneDeMoor2003,FilippovFrizenKolobova2018}. The theorem above occupies a different position. It stays inside the DOC parametrization and supplies a one-step deterministic certificate for two possibly different PPT maps by reading sufficient mixing directly from their diagonal sectors.

\section{Conclusion}

A sufficient PPT-squared criterion has been established for DOC maps in arbitrary finite dimension. The proof converts the two coherence sectors of a PPT composition into normalized correlation matrices, applies a universal toroidal mixing theorem, and lifts the resulting phase mixtures into a triplewise completely positive core. Once that core is removed, the remaining contribution is TCP whenever the classical product $AD$ dominates a rank-one nonnegative matrix determined by the diagonal entries of $A$ and $D$.

For $d=4$, the criterion reduces to
\begin{equation}
(AD)_{ij}\ge4\sqrt{A_{ii}A_{jj}D_{ii}D_{jj}}.
\label{eq:77}
\end{equation}
A continuous family of bistochastic PPT channels was constructed for which this inequality holds pairwise. For every $a\ne1$, the corresponding channel is not EB, yet the composition of any two family members is EB. The example gives an analytic family of nontrivial PPT-squared instances in the first dimension where the deterministic DOC problem remains open.

The result does not settle the full DOC4 conjecture and is not intended as a replacement for factor-width or semidefinite-hierarchy methods. Its contribution is a distinct structural certificate in which the quantum coherence information is absorbed globally by toroidal correlation matrices and the final hypothesis is purely classical. Physically, this separation suggests that sufficiently strong mixing of basis populations can overwhelm the entanglement-carrying role of the coherent sectors, even when each individual PPT channel is not EB. This viewpoint may be useful in identifying further deterministic subclasses of PPT channels whose compositions necessarily become EB.

\bibliographystyle{apsrev4-2}
\bibliography{referencias}

\end{document}